\newtheorem{thm}{Theorem}
\newtheorem{remark}{Remark}
\newtheorem{lem}{Lemma}
\newtheorem{define}{Definition}
\DeclareMathOperator{\ENC}{ENC}
\DeclareMathOperator{\DEC}{DEC}
\DeclareMathOperator*{\argmax}{arg\,max}
\DeclareMathOperator*{\argmin}{arg\,min}
\begin{document}
\IEEEoverridecommandlockouts
\title{Information Loss due to Finite Block Length in a Gaussian Line Network: An Improved Bound}
\author{\IEEEauthorblockN{Ramanan Subramanian, Badri N. Vellambi, and Ingmar Land}\\ \IEEEauthorblockA{Institute for Telecommunications Research, University of South Australia, Australia.\\ \{ramanan.subramanian, badri.vellambi, ingmar.land\}@unisa.edu.au}}
\maketitle
\thispagestyle{empty}
\pagestyle{empty}
\vspace{-0.50in}
\begin{abstract}
    A bound on the maximum information transmission rate through a cascade of Gaussian links is presented. The network model consists of a source node attempting to send a message drawn from a finite alphabet to a sink, through a cascade of Additive White Gaussian Noise links each having an input power constraint. Intermediate nodes are allowed to perform arbitrary encoding/decoding operations, but the block length and the encoding rate are fixed. The bound presented in this paper is fundamental and depends only on the design parameters namely, the network size, block length, transmission rate, and signal-to-noise ratio.
\end{abstract}
\section{Introduction}\label{sec:intro}
Transmission of messages through a series of links corrupted by noise is a situation that occurs frequently in communication networks. When the transmission block length is allowed to be arbitrarily large, it is quite simple to show (using the data-processing inequality) that the maximum information transfer rate is equal to the capacity of the weakest link. The possibilities in the finite block length regime are far less clear. Past work by Niesen \emph{et al.} in~\cite{Niesen07} and by us in~\cite{IZS12-Paper} have addressed this question for the Discrete Memoryless Channel (DMC) case and the Additive White Gaussian Noise (AWGN) case, respectively. These results are \emph{asymptotic} and provide scaling laws for the block length in terms of the number of nodes.

In this paper, we provide a universal \emph{non-asymptotic} bound on the maximum rate of information transfer for a line network consisting of a cascade of AWGN links. This complements and improves the asymptotic scaling results derived in~\cite{IZS12-Paper}. The bound derived here is universal in the following sense:
\begin{enumerate}
    \item While we assume that the block length and encoding rates are constant for all the nodes, we do not assume any particular structure for the channel codes and decision rules employed at any of the nodes.
    \item In addition, no assumption is made on the absolute/relative magnitudes of the network size and the block length.
\end{enumerate}
It is to be noted that the analysis in~\cite{IZS12-Paper} was found to be unsuitable to our requirement that the bound be non-asymptotic, and hence we take a totally new approach here.

The rest of the paper is organized as follows. In Section~\ref{sec:defs}, we introduce the notations and definitions used in the rest of the paper. In Section~\ref{sec:network-model}, we introduce the network and the signal transmission models. We then provide our main result followed by its derivation in Section~\ref{sec:analysis}, followed by a short discussion in Section~\ref{sec:discussion} that includes a comparison of our current results in relation to our previous results in~\cite{IZS12-Paper}.

\section{Notations and Definitions}\label{sec:defs}
Let $\mathds{R}$ be the set of all real numbers and $\mathds{N}$ be the set of all natural numbers. Natural logarithms are assumed unless the base is specified. The notation $\|\cdot\|$ represents $\mathcal{L}^2$ norm throughout. $\mathscr{S}_{M \times M}$ denotes the set of all $M \times M$ row-stochastic matrices, and $\mathscr{S}^*_{M \times M}$ denotes the set of all $M \times M$ row-stochastic matrices whose rows are identical.

Let  $N \in \mathds{N}$ denote the \emph{code length} or \emph{block length} of the transmission scheme. A \emph{code rate} $R > 0$ is a real number such that $2^{NR}$ is an integer. Let $\mathscr{M} \triangleq \{1,2,3,\ldots,2^{NR}\}$ be the \emph{message alphabet}.
\begin{define}
For a certain $P_0 \geq 0$, a rate $R$ length $N$ \emph{code} $\mathscr{C}$ with power constraint $P_0$ is an ordering of $M = 2^{NR}$ elements from $\mathds{R}^N$, called \emph{codewords}, such that the power of any codeword is lower than $P_0$:
\begin{equation}
    \mathscr{C} = \left(\mathbf{c}_1,\mathbf{c}_2,\mathbf{c}_3,\ldots,\mathbf{c}_{M}\right) \text{ s.t. } \forall {w \in \mathscr{M}}, \frac{1}{N}\|\mathbf{c}_w\|^2 \leq P_0.\nonumber
\end{equation}
\end{define}

\begin{define}
A rate $R$ length $N$ \emph{decision rule} $\mathscr{R} = \left(\mathcal{R}_1,\mathcal{R}_2,\ldots,\mathcal{R}_M\right)$ is an ordered partition of $\mathds{R}^N$ of size $M = 2^{NR}$.
\end{define}

\begin{define}
The \emph{encoding function} $\ENC_\mathscr{C}:\mathscr{M} \rightarrow \mathds{R}^N$ for a code $\mathscr{C}$ is defined by $\ENC_\mathscr{C}(w) = \mathbf{c}_w$, where $\mathbf{c}_w$ is the $w^{\text{th}}$ codeword in $\mathscr{C}$.
\end{define}

\begin{define}
The \emph{decoding function} $\DEC_\mathscr{R}:\mathds{R}^N \rightarrow \mathscr{M}$ for a decision rule $\mathscr{R}$ is defined by:
\begin{equation}
    \DEC_\mathscr{R}(\mathbf{y}) = w \text{ iff } \mathbf{y} \in \mathcal{R}_w,\nonumber
\end{equation}
where $\mathcal{R}_w$ is the $w^{\text{th}}$ partition in $\mathscr{R}$.

Let $\Omega_0 = \frac{ 2\pi^{\frac{N}{2}} }{\Gamma \left(\frac{N}{2}\right)}$, the solid angle of a $N$-sphere. Here, $\Gamma(\cdot)$ is the standard gamma function given by  $\Gamma(z) = \int_0^\infty  t^{z-1} e^{-t}\,{\rm d}t$.  We also define the following functions:
\begin{define}\label{def:functions}
    Let $Z_1,\ldots,Z_N$ be i.i.d. zero-mean unit-variance Gaussian random variables. Let for any $\gamma \geq 0$, $$\Phi_\gamma \triangleq \begin{cases}\cot^{-1} \left( \frac{ \sqrt{N\gamma} + z_1 }{ \sqrt{ \sum_{l=2}^{N} z_l^2 } } \right), & \sum_{l=2}^{N} z_l^2 > 0 \\ 0, \text{otherwise}.\end{cases}$$ Also, let $\forall v \in [0,\pi]$
    \[
        g(v) \triangleq \frac{ (N-1) \pi^{ \frac{N-1}{2} } }{ \Gamma\left( \frac{N+1}{2} \right) } \int_0^{v} \left( \sin \theta \right)^{N-2} \mathrm{d}\theta
    \]
    We then define the following function for $x \in \left[0,\Omega_0\right]$:
    \begin{eqnarray}
        \mathcal{Q}\left(x,N,\gamma\right) \triangleq \Pr\left[ g\left(\Phi_{\gamma}\right)\geq x \right].
    \end{eqnarray}
\end{define}
The above function is the same as $Q^*\left(\cdot\right)$ defined and used by Shannon in~\cite{Shannon-Papers}. In other words, computing $\mathcal{Q}\left(x,N,\gamma\right)$ gives the probability that a signal point on the power-constraint sphere $\| \mathbf{x} \|^2 = N P_0$ is displaced by a noise vector consisting of i.i.d. zero-mean unit-variance Gaussian random variables in each dimension outside an infinite right-circular cone of solid angle $x$ whose apex is at the origin and axis runs through the original signal point. Note that $\Phi_{\gamma}$ is a random variable whose probability distribution function has $N$ and $\gamma$ as parameters. The inverse cotangent function is assumed to have $[0, \pi]$ for its range so that it is continuous. Noting that $\cot^{-1}x$ will then be a decreasing function of $x$, we have the following remark about the monotonicity of the $\mathcal{Q}$-function w.r.t. $\gamma$:

\begin{remark}\label{rem:Q-is-monotonic}
    For any $\gamma_1, \gamma_2 > 0$ s.t. $\gamma_1 \geq \gamma_2$ and $x \in \left[0,\Omega_0\right]$,
    $$\mathcal{Q}\left(x,N,\gamma_2\right) \geq \mathcal{Q}\left(x,N,\gamma_1\right).$$
\end{remark}

The term $g \left( \Phi_{\gamma} \right)$ is equal to the solid angle of the cone formed by rotating the line joining the origin and the displaced signal point about the line joining the origin and the original signal point as the axis.
\end{define}
\section{Network Model}\label{sec:network-model}
The line network model to be considered is given in Fig.~\ref{fig:netwk-model}. There are $n+1$ nodes in the network identified by the indices $\left\{0,1,2,\ldots,n\right\}$. The $n$ hops in the network are each associated with noise variances ${\sigma_i}^2 \geq \sigma_0^2 > 0, 1 \leq i \leq n$. In other words, the noise variances can be different for each link, but they are equal to or greater than a certain minimum $\sigma_0^2$ that is strictly positive. Nodes $0, 1,\ldots,n-1$ choose codes $\mathscr{C}_0,\mathscr{C}_1,\ldots,\mathscr{C}_{n-1}$ respectively to transmit, and Nodes $1,2,\ldots,n$ choose decision rules $\mathscr{R}_1,\mathscr{R}_2,\ldots,\mathscr{R}_{n}$ for reception. From here on, for the sake of simplicity, we let $\ENC_i$ and $\DEC_i$ to denote $\ENC_{\mathscr{C}_i}$ and $\DEC_{\mathscr{R}_i}$, respectively. All the codes and decision rules have the same rate $R$ and block length $N$. Node 0 generates a random message $W \in \mathscr{M}$ with probability distribution $p_W(w)$ and intends to convey the same to Node $n$ through the noisy multihop path in the network. Each node estimates the message sent by the node in the previous hop from its noisy observation, encodes the message as a codeword, and transmits the resulting codeword to the next hop. The codeword transmitted by Node $i$, for any $0 \leq i \leq n-1$ is given by $\mathbf{X}_i = \ENC_i(\hat{W}_i)$, where $\hat{W}_i$ is the estimate of the message at Node $i$ after decoding (Note that $\hat{W}_0 = W$ in this notation). The observation received by Node $i$, for any $1 \leq i \leq n$ is given by $\mathbf{Y}_i$, which follows a conditional density function that depends on the codeword $\mathbf{X}_{i-1}$ sent by the previous node:
\begin{equation}
    p_{\mathbf{Y}_i|\mathbf{X}_{i-1}}\left(\mathbf{y}|\mathbf{x}\right) = \frac{1}{\left(2\pi{\sigma_i}^2\right)^\frac{N}{2}}e^{-\frac{\left\|\mathbf{y}-\mathbf{x}\right\|^2}{2\sigma_i^2}}.\label{eqn:conditional:distrib}
\end{equation}
The above density function follows from the assumptions of AWGN noise and memorylessness of the channel. The message $\hat{W}_i$ decoded by Node $i$ is given by $\hat{W}_i = \DEC_i(\mathbf{Y}_i)$. Note that the random variable $\hat{W}_n$ represents the message decoded by the final sink.
\begin{figure*}[htbp]
    \centering
    \psfrag{Node 0}{\scriptsize{Node 0}}
    \psfrag{Node i}{\scriptsize{Node $i$}}
    \psfrag{Node n}{\scriptsize{Node $n$}}
    \psfrag{pw}{\hspace{-0.075in}\tiny{$p_W$}}
    \psfrag{W}{\tiny{$W$}}
    \psfrag{ENC-0}{\hspace{-0.05in}\tiny{$\ENC_0$}}
    \psfrag{x0}{\tiny{$\mathbf{X}_0$}}
    \psfrag{p1yx}{\hspace{-0.1in}\tiny{$p_{\mathbf{Y}_1|\mathbf{X}_0}$}}
    \psfrag{y1}{\tiny{$\mathbf{Y}_1$}}
    \psfrag{piyx}{\hspace{-0.15in}\tiny{$p_{\mathbf{Y}_i|\mathbf{X}_{i-1}}$}}
    \psfrag{yi}{\hspace{-0.025in}\tiny{$\mathbf{Y}_i$}}
    \psfrag{DEC-i}{\hspace{-0.05in}\tiny{$\DEC_i$}}
    \psfrag{Wi}{\hspace{-0.05in}\tiny{$\hat{W}_i$}}
    \psfrag{ENC-i}{\hspace{-0.05in}\tiny{$\ENC_i$}}
    \psfrag{xi}{\tiny{$\mathbf{X}_i$}}
    \psfrag{pnyx}{\hspace{-0.2in}\tiny{$p_{\mathbf{Y}_n|\mathbf{X}_{n-1}}$}}
    \psfrag{yn}{\hspace{-0.05in}\tiny{$\mathbf{Y}_n$}}
    \psfrag{DEC-n}{\hspace{-0.05in}\tiny{$\DEC_n$}}
    \psfrag{W^}{\hspace{-0.05in}\tiny{$\hat{W}_n$}}
    \includegraphics[width=\textwidth]{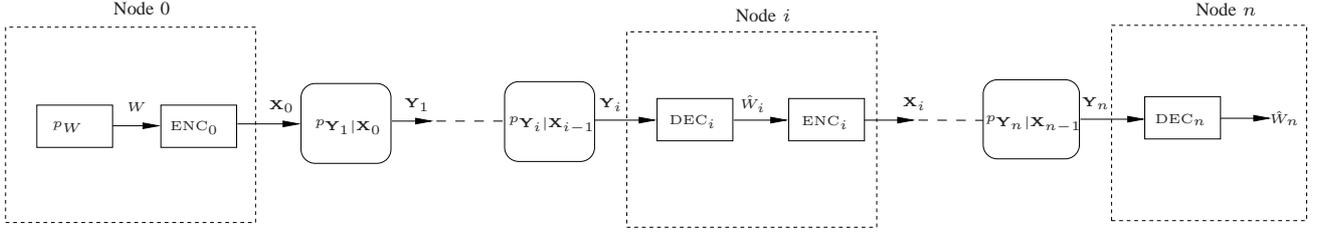}
    \caption{Line Network Model.}
    \label{fig:netwk-model}
\end{figure*}
\section{The main result and analysis}\label{sec:analysis}
The following theorem summarizes our main result.
\begin{thm}\label{thm:unif-full}
    In a line network employing any choice of rate $R$ length $N$ codes $\mathscr{C}_0,\mathscr{C}_1,\ldots,\mathscr{C}_{n-1}$ and rate $R$ dimension $N$ decision rules $\mathscr{R}_1,\mathscr{R}_2,\ldots,\mathscr{R}_n$,
    $$\mathcal{I}\left(W;\hat{W}_n\right) \leq NR \left[1 - M \mathcal{Q} \left( \frac{M-1}{M}\Omega_0, N, \frac{P_0}{\sigma_0^2} \right) \right]^n.$$
\end{thm}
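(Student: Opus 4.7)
The strategy is to model each hop as a discrete channel on $\mathscr{M}$, establish a uniform Doeblin-type minorisation for each, and combine the resulting strong data-processing contractions. Hop $i$ induces a channel $P_i \in \mathscr{S}_{M \times M}$ given by
\[
    P_i(v,w) \;=\; \Pr\bigl[\mathbf{Y}_i \in \mathcal{R}_w^{(i)} \,\big|\, \mathbf{X}_{i-1}=\ENC_{i-1}(v)\bigr],
\]
where $\mathcal{R}_w^{(i)}$ is the $w$-th element of $\mathscr{R}_i$. Independence of the noise across hops makes $W \to \hat{W}_0 \to \cdots \to \hat{W}_n$ a Markov chain whose links act through $P_1,\ldots,P_n$ in turn.

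The key step is a uniform lower bound on the Doeblin coefficient of each $P_i$. Let $\omega_w^{(i)}$ denote the solid angle (from the origin) of $\mathcal{R}_w^{(i)}$. Applying Shannon's cone inequality~\cite{Shannon-Papers} to the complement $\mathds{R}^N \setminus \mathcal{R}_w^{(i)}$---of solid angle $\Omega_0 - \omega_w^{(i)}$---yields, for every admissible codeword,
\[
    P_i(v,w) \;\ge\; \mathcal{Q}\bigl(\Omega_0 - \omega_w^{(i)},\,N,\,\tfrac{P_0}{\sigma_i^2}\bigr).
\]
Because $\{\mathcal{R}_w^{(i)}\}_w$ partitions $\mathds{R}^N$, the angular shadows of these regions cover $S^{N-1}$ and hence $\sum_w \omega_w^{(i)} \ge \Omega_0$. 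Combining this with the convexity of $\mathcal{Q}(\cdot,N,\gamma)$ on $[0,\Omega_0]$ (which stems from the density of $g(\Phi_\gamma)$ being non-increasing) and the monotonicity recorded in Remark~\ref{rem:Q-is-monotonic}, Jensen's inequality gives
\[
    \epsilon_i \;\triangleq\; \sum_w \min_v P_i(v,w) \;\ge\; M\,\mathcal{Q}\!\left(\tfrac{M-1}{M}\Omega_0,\,N,\,\tfrac{P_0}{\sigma_0^2}\right).
\]
Consequently $P_i = \epsilon_i J_i + (1-\epsilon_i) P_i'$ for some $J_i \in \mathscr{S}^*_{M \times M}$ and $P_i' \in \mathscr{S}_{M \times M}$.

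This decomposition is realised by an independent switch $B_i \in \{0,1\}$ with $\Pr[B_i=1]=\epsilon_i$: when $B_i=1$, $\hat{W}_i$ is drawn from the common row of $J_i$ (so $\hat{W}_i \perp W$); otherwise it is drawn from $P_i'(\hat{W}_{i-1},\cdot)$. Conditioning on $B_i$ and using the data-processing inequality gives $\mathcal{I}(W;\hat{W}_i) \le (1-\epsilon_i)\,\mathcal{I}(W;\hat{W}_{i-1})$, and iterating for $i=1,\ldots,n$ while bounding $\mathcal{I}(W;\hat{W}_0) \le H(W) \le NR$ produces the claim.

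The main obstacle lies in the middle step: one must formalise the solid angle $\omega_w^{(i)}$ of a general measurable decision region so that Shannon's cone inequality applies and the shadows sum to at least $\Omega_0$, and---most delicately---establish convexity of $\mathcal{Q}(\cdot,N,\gamma)$ in its first argument. This last point reduces to monotonicity of the density of $g(\Phi_\gamma)$, which must be verified by direct analysis involving the Jacobian $g'(\theta) \propto \sin^{N-2}\theta$ and the explicit distribution of $\Phi_\gamma$.
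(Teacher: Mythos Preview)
Your Doeblin/contraction skeleton is sound and matches the paper's: writing each $P_i$ as $\epsilon_i J_i + (1-\epsilon_i)P_i'$ with $J_i\in\mathscr{S}^*_{M\times M}$ and iterating gives $\mathcal{I}(W;\hat W_n)\le NR\prod_i(1-\epsilon_i)$, whether one argues via convexity of mutual information (as the paper does) or via your switch variable $B_i$. The paper also identifies $\epsilon_i=\sum_w\min_v P_i(v,w)$ as the optimal Doeblin constant (its Lemma~2), so up to this point the two proofs coincide.

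The genuine gap is in your geometric step. Shannon's cone inequality does \emph{not} assert that $\Pr[\mathbf{Y}\in\mathcal{R}\mid\mathbf{c}]\ge\mathcal{Q}(\Omega_0-\omega,N,\gamma)$ for an arbitrary measurable $\mathcal{R}$ of ``solid angle'' $\omega$ and an arbitrary admissible codeword $\mathbf{c}$; it requires that $\mathbf{c}$ lie on the sphere $\|\mathbf{x}\|^2=NP_0$ and that the region be a semi-infinite pyramid with apex at the origin. Your proposed inequality is in fact false in general: take $\mathcal{R}_w=\{y_1>L\}$ for large $L$. Its radial shadow is a hemisphere, so $\omega_w=\Omega_0/2$, yet $\min_v P_i(v,w)\to 0$ as $L\to\infty$ while $\mathcal{Q}(\Omega_0/2,N,P_0/\sigma^2)$ is a fixed positive number. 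Formalising the ``solid angle of a general measurable region'' and proving convexity of $\mathcal{Q}$ will not rescue this; the obstruction is structural. The paper closes this gap by first \emph{optimising out} the decision rule and the code before invoking any cone geometry: it shows (Lemma~3) that for fixed $\mathscr{C}$ the infimum of $\sum_w\min_v P(v,w)$ over all partitions is attained at the farthest-point Voronoi tessellation $\mathscr{R}^*(\mathscr{C})$, then (Lemma~4) that one may push every codeword radially out to the sphere without increasing $\mu_\sigma$. Only after these two reductions do the decision regions become pyramids from the origin, each containing the antipode $-\mathbf{c}_k$ of its associated codeword, and only then does Shannon's cone comparison (Lemma~5 and Appendix~B) legitimately yield the bound $\mathcal{Q}\bigl(\tfrac{M-1}{M}\Omega_0,N,P_0/\sigma^2\bigr)$ per region, to which Jensen is applied exactly as you propose.
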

We now delve into the proof of Theorem~\ref{thm:unif-full}. Let $p_{\hat{W}_i \mid \hat{W}_{i-1}}\left(k \mid j\right), \forall j,k \in \mathscr{M}$ denote the conditional probabilities induced by channel encoding, noisy reception, and decoding at the $i^{\text{th}}$ hop. For each hop $i$, let $\mathbf{P}_i$ be the $M \times M$ row-stochastic matrix whose entry in row $j$ and column $k$ is $p_{\hat{W}_i \mid \hat{W}_{i-1}}\left(k \mid j\right)$. Note that the $j^{\text{th}}$  row in $\mathbf{P}_i$ gives the conditional probability mass function on the estimate $\hat{W}_i$ of the original message $W$ at hop $i$, given that the message sent by Node $i - 1$ is $j$. Let
\[
    \mathbf{P} \triangleq \prod_{i = 1}^{n} \mathbf{P}_i.
\]
Then, $\mathbf{P}$ clearly represents the row-stochastic probability transition matrix between the original message $W$ and the message decoded at the sink $\hat{W}_n$. The transition matrix $\mathbf{P}$ along with $p_W$ (the probability mass function of the original message $W$) together induce a joint distribution between $W$ and $\hat{W}_n$. Our goal is to find an upper bound on $\mathcal{I}\left(W; \hat{W}_n\right)$, with the constraints given in Section~\ref{sec:network-model}.

For any $M \times M$ row-stochastic matrix $\mathbf{Q}$, define $\psi\left(\mathbf{Q} \mid p_W \right) \triangleq \mathcal{I}\left(W; \tilde{W}\right)$, where $\tilde{W}$ is a random variable conditionally dependent on $W$ according to the probability transition matrix $\mathbf{Q}$ and $W$ is drawn according to the distribution $p_W$ (which is the distribution of the message at Node 0). For simplicity, we just write $\psi\left(\mathbf{Q}\right)$ instead of $\psi\left(\mathbf{Q} \mid p_W \right)$ for the rest of the paper, assuming throughout that the specific distribution $p_W$ is used. Ultimately, our final bound is independent of $p_W$. We now have
\begin{eqnarray}
    \mathcal{I}\left(W;\hat{W}_n\right) &=& \psi \left( \prod_{i = 1}^n \mathbf{P}_i \right).\label{eqn:psi-prod}
\end{eqnarray}
Before proceeding further to bound $\mathcal{I}\left(W;\hat{W}_n\right)$, we introduce the following useful lemma:
\begin{lem}\label{lem:ss-stoch-prod}
    For any $\mathbf{Q}_1 \in \mathscr{S}^*_{M \times M}$ and any $\mathbf{Q}_2 \in \mathscr{S}_{M \times M}, \psi\left(\mathbf{Q}_1\mathbf{Q_2}\right) = 0$.
\end{lem}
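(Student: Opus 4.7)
The plan is to show that multiplying any row-stochastic matrix on the left by a matrix in $\mathscr{S}^*_{M \times M}$ produces another matrix in $\mathscr{S}^*_{M \times M}$, and then observe that a transition matrix with identical rows makes input and output independent, forcing the mutual information to vanish.

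First I would unpack the definition of $\mathscr{S}^*_{M \times M}$. If $\mathbf{Q}_1 \in \mathscr{S}^*_{M \times M}$, then there exists a single probability vector $\mathbf{q} = (q_1,\ldots,q_M)$ such that every row of $\mathbf{Q}_1$ equals $\mathbf{q}$, i.e.\ $(\mathbf{Q}_1)_{j,k} = q_k$ for all $j$. Then, for any $\mathbf{Q}_2 \in \mathscr{S}_{M \times M}$ and any indices $j, \ell \in \mathscr{M}$,
\[
\left(\mathbf{Q}_1\mathbf{Q}_2\right)_{j,\ell} = \sum_{k=1}^{M} (\mathbf{Q}_1)_{j,k}(\mathbf{Q}_2)_{k,\ell} = \sum_{k=1}^{M} q_k (\mathbf{Q}_2)_{k,\ell},
\]
which is independent of the row index $j$. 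Hence every row of $\mathbf{Q}_1\mathbf{Q}_2$ coincides with the common vector $\mathbf{q}\mathbf{Q}_2$, and $\mathbf{Q}_1\mathbf{Q}_2 \in \mathscr{S}^*_{M \times M}$.

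Next I would translate this structural fact into an information-theoretic statement. Let $\tilde{W}$ be the random variable induced from $W \sim p_W$ via the transition matrix $\mathbf{Q}_1\mathbf{Q}_2$. By the argument above, the conditional distribution $\Pr[\tilde{W} = \ell \mid W = j] = (\mathbf{q}\mathbf{Q}_2)_\ell$ does not depend on $j$. Thus $\tilde{W}$ is statistically independent of $W$ regardless of the marginal $p_W$, and therefore
\[
\psi\left(\mathbf{Q}_1\mathbf{Q}_2\right) = \mathcal{I}\left(W; \tilde{W}\right) = 0,
\]
which is exactly the claim.

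There is no real obstacle in this lemma — it is essentially a bookkeeping exercise. The only thing worth being careful about is that the defining property of $\mathscr{S}^*_{M \times M}$ is applied on the \emph{left} factor; the analogous statement with $\mathbf{Q}_1 \in \mathscr{S}_{M \times M}$ and $\mathbf{Q}_2 \in \mathscr{S}^*_{M \times M}$ is false in general, so the asymmetry between the roles of $\mathbf{Q}_1$ and $\mathbf{Q}_2$ in the hypothesis must be respected in the proof.
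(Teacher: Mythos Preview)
Your proof is correct and follows exactly the paper's approach: the paper's one-line proof simply asserts that $\mathbf{Q}_1\mathbf{Q}_2 \in \mathscr{S}^*_{M\times M}$ whenever $\mathbf{Q}_1 \in \mathscr{S}^*_{M\times M}$, and that $\psi$ vanishes on $\mathscr{S}^*_{M\times M}$; you have merely spelled out both facts in detail.

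One correction to your closing remark, however: the analogous statement with the roles swapped is \emph{also} true, not false. If $\mathbf{Q}_2 \in \mathscr{S}^*_{M\times M}$ has every row equal to a probability vector $\mathbf{r}$, then for any row-stochastic $\mathbf{Q}_1$,
\[
(\mathbf{Q}_1\mathbf{Q}_2)_{j,\ell} \;=\; \sum_{k=1}^{M} (\mathbf{Q}_1)_{j,k}\, r_\ell \;=\; r_\ell \sum_{k=1}^{M} (\mathbf{Q}_1)_{j,k} \;=\; r_\ell,
\]
so $\mathbf{Q}_1\mathbf{Q}_2 = \mathbf{Q}_2 \in \mathscr{S}^*_{M\times M}$ and $\psi(\mathbf{Q}_1\mathbf{Q}_2) = 0$ in that case too. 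This does not affect the validity of your argument for the lemma as stated, but the asymmetry caution you flag is unwarranted.
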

\begin{proof}
    The result follows from noting that for any $\mathbf{Q} \in \mathscr{S}^*_{M \times M}$, $\psi\left(\mathbf{Q}\right) = 0$ and $\mathbf{Q}_1 \mathbf{Q}_2 \in \mathscr{S}^*_{M \times M}$ for $\mathbf{Q}_1 \in \mathscr{S}^*_{M \times M}$ and $\mathbf{Q}_2 \in \mathscr{S}_{M \times M}$.
\end{proof}

Now for each $i$, consider $\beta_i \in \left[0,1\right]$ such that
\begin{eqnarray}
    \mathbf{P}_i = \beta_i\mathbf{P}_{\beta_i} + \bar{\beta_i}\mathbf{P}_{\bar{\beta_i}},\label{eqn:stoch-matrix-convex-combo}
\end{eqnarray}
where $\bar{\beta_i} = 1 - \beta_i$, $\mathbf{P}_{\beta_i} \in \mathscr{S}_{M \times M}$ and $\mathbf{P}_{\bar{\beta_i}} \in \mathscr{S}^*_{M \times M}$. In other words for each $i$, $\mathbf{P}_i$ be expressed as a convex combination of two row-stochastic matrices, one of them being a steady-state matrix. From (\ref{eqn:psi-prod}),
\begin{align}
    \mathcal{I}\left(W;\hat{W}_n\right) &= \psi\Big(\prod_{i = 1}^n \mathbf{P}_i\Big)\hspace{-0.5mm}=\hspace{-0.5mm} \psi\Big(\hspace{-0.5mm} \big( \beta_1 \mathbf{P}_{\beta_1} \hspace{-0.5mm}+\hspace{-0.5mm} \bar{\beta_1}\mathbf{P}_{\bar{\beta_1}} \big)\hspace{-0.5mm} \prod_{i=2}^n\hspace{-0.5mm} \mathbf{P}_i \Big)\nonumber\\
    &\stackrel{(a)}{\leq} \beta_1 \psi \Big( \mathbf{P}_{\beta_1} \prod_{i=2}^n \mathbf{P}_i \Big) + \bar{\beta_1} \psi \Big( \mathbf{P}_{\bar{\beta_1}}  \prod_{i=2}^n \mathbf{P}_i \Big) \nonumber\\
    &\stackrel{(b)}{\leq}\beta_1 \psi \Big( \prod_{i=2}^n \mathbf{P}_i\Big),\nonumber
\end{align}
where (a) follows from the convexity property of mutual information w.r.t. the probability transition function, and (b) follows from applying the data processing inequality to the first term and Lemma~\ref{lem:ss-stoch-prod} to the second term. By induction, we have:
\begin{eqnarray}
    \mathcal{I}\left(W; \hat{W}_n\right) &\leq& \left( \prod_{i=1}^n \beta_i \right) \psi\left( I_{M} \right) = NR \prod_{i=1}^n \beta_i, \label{eqn:mutual-info-beta-product}
\end{eqnarray}
where $I_M$ is the $M \times M$ identity matrix. The above procedure is based on a key idea developed in the proof of Theorem V.1 in~\cite{Niesen07} in a different context. We have applied the same to facilitate a useful intermediate result given by (\ref{eqn:mutual-info-beta-product}). The remaining portion of the analysis that enables us to obtain the final bound involves novel steps.

We now need to determine how each $\mathbf{P}_i$ is to be split in the form given by (\ref{eqn:stoch-matrix-convex-combo}) in an optimal manner, to obtain the best possible bound using this approach. Specifically, we need $\beta_i$ to  be as small as possible for each $i$. Consider the following choice:
\begin{eqnarray}
    \beta_i &=& 1 - \sum_{k = 1}^M \min_j p_{\hat{W}_i \mid \hat{W}_{i-1}} \left( k \mid j \right)\nonumber\\
    P_{\bar{\beta_i}; j,k} &=& \frac{1}{1 - \beta_i} \min_{j'} p_{\hat{W}_i \mid \hat{W}_{i-1}} \left( k \mid j' \right),\label{eqn:convex-combo-params-choice}
\end{eqnarray}
where $P_{\bar{\beta_i}; j,k}$ denotes the element on $j^{\text{th}}$ row and $k^{\text{th}}$ column of the matrix $\mathbf{P}_{\bar{\beta_i}}$. Note that this matrix consists of identical rows, where each entry in any row is equal to the smallest element in the corresponding column of $\mathbf{P}_i$ scaled by a normalizing factor. The other matrix $\mathbf{P}_{\beta_i}$ is determined by substituting these expressions for $\beta_i$ and $\mathbf{P}_{\bar{\beta_i}}$ into (\ref{eqn:stoch-matrix-convex-combo}). Note that the two matrices $\mathbf{P}_{\beta_i}$ and $\mathbf{P}_{\bar{\beta}_i}$ determined thus will be stochastic for any $i$, and that $\beta_i \in [0,1]$. Hence, we can obtain $\mathbf{P}_i$ as a convex combination of two stochastic matrices in this manner for any $i$. The following lemma shows that the value of $\beta_i$ provided in (\ref{eqn:convex-combo-params-choice}) is the best possible value for the purpose of the bound in (\ref{eqn:mutual-info-beta-product}).

\begin{lem}\label{lem:best-beta}
    Let $\mathbf{Q} = \left[Q_{jk}\right] \in \mathscr{S}_{M \times M}$, $\mathbf{Q}_1 \in \mathscr{S}_{M \times M}, \mathbf{Q}_2 \in \mathscr{S}^*_{M \times M}$, and let $\beta \in [0,1]$ be chosen such that $\mathbf{Q} = \beta \mathbf{Q}_1 + (1-\beta)\mathbf{Q}_2$. Then, $\beta \geq 1 - \sum_{k = 1}^M \min_j Q_{jk}$.
\end{lem}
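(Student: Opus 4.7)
The plan is to exploit the special structure of $\mathbf{Q}_2 \in \mathscr{S}^*_{M\times M}$, namely that all of its rows are identical. Writing the common row as a probability vector $(r_1, r_2, \ldots, r_M)$ with $\sum_{k=1}^M r_k = 1$, the assumed decomposition reads, entrywise,
\begin{equation*}
    Q_{jk} \;=\; \beta\,Q_{1;jk} \;+\; (1-\beta)\,r_k \qquad \forall\, j,k.
\end{equation*}
Since $\mathbf{Q}_1$ is row-stochastic, each entry $Q_{1;jk}$ is nonnegative, so $Q_{jk} \geq (1-\beta)\,r_k$ for every $j$ and every $k$.

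Next I would fix $k$ and take the minimum of both sides over $j$. Because the right-hand side $(1-\beta)\,r_k$ does not depend on $j$, this yields
\begin{equation*}
    \min_{j}\, Q_{jk} \;\geq\; (1-\beta)\,r_k.
\end{equation*}
Summing over $k \in \{1,\ldots,M\}$ and using $\sum_{k=1}^M r_k = 1$ gives
\begin{equation*}
    \sum_{k=1}^{M} \min_{j}\, Q_{jk} \;\geq\; (1-\beta)\,\sum_{k=1}^{M} r_k \;=\; 1-\beta,
\end{equation*}
which, upon rearrangement, is precisely the claimed inequality $\beta \geq 1 - \sum_{k=1}^M \min_j Q_{jk}$.

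There is essentially no obstacle here; the argument is a direct consequence of the two ingredients (identical rows of $\mathbf{Q}_2$, and nonnegativity of $\mathbf{Q}_1$). The only ``care'' point is making sure to pull the minimum through the correct side of the inequality, which is fine because the term with $r_k$ is independent of the row index. This is also consistent with the interpretation given just before the lemma: the construction in~(\ref{eqn:convex-combo-params-choice}) is designed so that the common row of $\mathbf{P}_{\bar{\beta_i}}$ is exactly the column-wise minimum of $\mathbf{P}_i$ (up to the normalizing factor $1/(1-\beta_i)$), which saturates the inequality just proved and hence achieves the smallest admissible $\beta_i$.
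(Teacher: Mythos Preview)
Your proof is correct and follows essentially the same approach as the paper: both use nonnegativity of the entries of $\mathbf{Q}_1$ to conclude $(1-\beta)r_k \le Q_{jk}$ (the paper writes $q_k$ for your $r_k$), then take the minimum over $j$ and sum over $k$ using $\sum_k r_k = 1$. The concluding remark about saturation by the choice in~(\ref{eqn:convex-combo-params-choice}) is also in line with the paper's discussion.
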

\begin{proof}
    Since $\mathbf{Q} = \beta \mathbf{Q}_1 + (1-\beta)\mathbf{Q}_2$, every element of the matrix $(1-\beta) \mathbf{Q}_2$ must be smaller than the corresponding element in $\mathbf{Q}$. Consider any column $k$ of $\mathbf{Q}_2$. All the elements in that column are equal to, say, $q_k$. It then follows that $(1-\beta)q_k \leq Q_{jk}$ for every $j$, and hence $(1-\beta)q_k \leq \min_j Q_{jk}$. Summing over all $k$ and noting that $\sum_{k=1}^M q_k = 1$, we obtain the desired result.
\end{proof}
%

Let $\mathscr{C}_{i-1}$ be the code used by Node $i - 1$ and let $\mathscr{R}_i$ be the decision rule used by Node $i$. As per the argument above, the optimal choice of $\beta_i$ for this link will be:
\begin{eqnarray}
    1 - \beta_i &=& \sum_{k = 1}^M \min_j p_{\hat{W}_i \mid \hat{W}_{i-1}} \left( k \mid j \right) \nonumber \\
    &=& \sum_{\mathcal{R} \in \mathscr{R}_i} \min_{\mathbf{c} \in \mathscr{C}_{i-1}} \int_{\mathcal{R}} \frac{e^{ -{\left\|\mathbf{y} - \mathbf{c}\right\|^2}/{2 \sigma_i^2 }}}{ \left( 2 \pi \sigma_i^2 \right)^{{N}/{2}} } \mathrm{d} \mathbf{y}.
\end{eqnarray}
In other words, we can write
\begin{eqnarray}
    \beta_i &=& 1 - \mu_{\sigma_i}\left(\mathscr{C}_{i-1},\mathscr{R}_i\right),
\end{eqnarray}
where for any $\sigma > 0$, rate $R$ length $N$ code $\mathscr{C}$, and rate $R$ dimension $N$ decision rule $\mathscr{R}$,
\begin{eqnarray}
    \mu_{\sigma}\left(\mathscr{C},\mathscr{R}\right) &\triangleq& \sum_{\mathcal{R} \in \mathscr{R}} \min_{\mathbf{c} \in \mathscr{C}} \int_{\mathcal{R}} \frac{e^{ -\left\|\mathbf{y} - \mathbf{c}\right\|^2 /2 \sigma^2 } }{ \left( 2 \pi \sigma^2 \right)^{N/2} } \mathrm{d}\mathbf{y}. \label{eqn:mu-defined}
\end{eqnarray}
We would like to find a lower bound on $\mu_{\sigma_i} \left(\mathscr{C}_{i-1}, \mathscr{R}_i\right)$ that depends only on the parameters $N, R, P_0$ and $\sigma_0$. To do so, we need the following three lemmas. Lemma~\ref{lem:opt-R-given-C} removes the dependency of the bound on the choice of the decision rule. Lemma~\ref{lem:ball-to-sphere} shows that we can restrict our choice of codes to those having all codewords that satisfy the power constraint with equality. For this class of codes, Lemma~\ref{lem:opt-code-config} gives a bound in the desired form, depending solely on $N, R, P_0$ and $\sigma_0$. From now on, we denote $(2 \pi \sigma_0^2)^{\frac{N}{2}}$ by $\eta$ for brevity.

\begin{lem}
    \label{lem:opt-R-given-C}
    Let $\mathscr{C} = \left( \mathbf{c}_1, \mathbf{c}_2, \ldots, \mathbf{c}_M \right)$ be a given rate $R$ length $N$ code. Further, let $\mathscr{R}^*\left(\mathscr{C}\right)$ be the decision rule given by $\left( \mathcal{R}^*_1,\mathcal{R}^*_2,\ldots,\mathcal{R}^*_M \right)$ where for $1 \leq i \leq M$,
    \begin{eqnarray}
        \mathcal{R}^*_i &=& \left\{ \mathbf{y} \in \mathds{R}^N \mid i = \argmax_{i'} \left\|\mathbf{y} - \mathbf{c}_{i'}\right\|\right\}.\nonumber
    \end{eqnarray}
    Then, for any rate $R$ dimension $N$ decision rule $\mathscr{R}$,
    \begin{eqnarray}
        \mu_{\sigma}\left(\mathscr{C},\mathscr{R}\right) \geq \mu_{\sigma}\left(\mathscr{C},\mathscr{R}^*\left(\mathscr{C}\right)\right).\nonumber
    \end{eqnarray}
\end{lem}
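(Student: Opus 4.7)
The plan is to show that both sides of the claimed inequality can be compared against the common integral $I := \int_{\mathds{R}^N} \min_{\mathbf{c} \in \mathscr{C}} f_{\mathbf{c}}(\mathbf{y})\,\mathrm{d}\mathbf{y}$, where $f_{\mathbf{c}}(\mathbf{y}) := \frac{1}{(2\pi\sigma^2)^{N/2}} e^{-\|\mathbf{y}-\mathbf{c}\|^2/(2\sigma^2)}$ is the Gaussian density centered at $\mathbf{c}$. The right-hand side will turn out to equal $I$ exactly, and the left-hand side will always be at least $I$ by a swap-of-min-and-integral argument.

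First, for an arbitrary decision rule $\mathscr{R} = (\mathcal{R}_1, \ldots, \mathcal{R}_M)$, I would use the pointwise fact that for any fixed $\mathbf{c}' \in \mathscr{C}$, $\int_{\mathcal{R}_i} f_{\mathbf{c}'}(\mathbf{y})\,\mathrm{d}\mathbf{y} \geq \int_{\mathcal{R}_i} \min_{\mathbf{c} \in \mathscr{C}} f_{\mathbf{c}}(\mathbf{y})\,\mathrm{d}\mathbf{y}$; taking the minimum over $\mathbf{c}'$ on the left preserves the inequality, yielding $\min_{\mathbf{c} \in \mathscr{C}} \int_{\mathcal{R}_i} f_{\mathbf{c}} \geq \int_{\mathcal{R}_i} \min_{\mathbf{c} \in \mathscr{C}} f_{\mathbf{c}}$. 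Summing over $i$ and using that $\{\mathcal{R}_i\}$ partitions $\mathds{R}^N$, I obtain $\mu_{\sigma}(\mathscr{C},\mathscr{R}) \geq I$.

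Next, for the specific rule $\mathscr{R}^*(\mathscr{C})$, I would exploit the monotonicity of $f_{\mathbf{c}}(\mathbf{y})$ in $\|\mathbf{y}-\mathbf{c}\|$: since $\mathcal{R}^*_i$ is exactly the set of $\mathbf{y}$ for which $\mathbf{c}_i$ is the farthest codeword, on $\mathcal{R}^*_i$ we have $f_{\mathbf{c}_i}(\mathbf{y}) = \min_{i'} f_{\mathbf{c}_{i'}}(\mathbf{y})$ pointwise. Hence $\int_{\mathcal{R}^*_i} f_{\mathbf{c}_i} = \int_{\mathcal{R}^*_i} \min_{\mathbf{c} \in \mathscr{C}} f_{\mathbf{c}}$, which also equals $\min_{\mathbf{c} \in \mathscr{C}} \int_{\mathcal{R}^*_i} f_{\mathbf{c}}$ (the reverse inequality from step one combined with the fact that the choice $\mathbf{c} = \mathbf{c}_i$ already attains it). Summing over $i$ gives $\mu_{\sigma}(\mathscr{C}, \mathscr{R}^*(\mathscr{C})) = I$.

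Combining the two computations yields $\mu_{\sigma}(\mathscr{C},\mathscr{R}) \geq I = \mu_{\sigma}(\mathscr{C},\mathscr{R}^*(\mathscr{C}))$. There is no real obstacle here; the only minor wrinkle is handling ties in the argmax defining $\mathcal{R}^*_i$, but these occur on a Lebesgue-null set and so do not affect any integral. The mild novelty is noticing that the ``worst-codeword'' rule is the right minimizer precisely because we are minimizing (not maximizing) the column-min sum $\mu_{\sigma}$, which is dual to the familiar nearest-neighbour decoder that maximizes the probability of correct decoding.
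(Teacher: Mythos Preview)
Your proof is correct and follows essentially the same route as the paper: both arguments pivot on the integral $\int_{\mathds{R}^N}\min_{j} f_{\mathbf{c}_j}(\mathbf{y})\,\mathrm{d}\mathbf{y}$, use the swap $\min_j\int_{\mathcal{R}} f_{\mathbf{c}_j}\ge\int_{\mathcal{R}}\min_j f_{\mathbf{c}_j}$ for an arbitrary partition, and then observe that on $\mathcal{R}^*_k$ the pointwise minimum is attained by $\mathbf{c}_k$, yielding equality for $\mathscr{R}^*(\mathscr{C})$. The only cosmetic difference is that you name the pivot integral $I$ and explicitly flag the measure-zero tie set, whereas the paper runs the chain of (in)equalities in one display.
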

\begin{proof}
    For any code $\mathscr{C}$ and decision rule $\mathscr{R}$, we have:
    \begin{align}
    \eta \mu_{\sigma}\left(\mathscr{C},\mathscr{R}\right) &= \sum_{\mathcal{R} \in \mathscr{R}} \min_j \int_{\mathcal{R}} e^{ -\frac{ \left\|\mathbf{y} - \mathbf{c}_j\right\|^2 }{ 2 \sigma^2 } } \mathrm{d}\mathbf{y} \nonumber\\
    &\geq \sum_{\mathcal{R} \in \mathscr{R}} \int_{\mathcal{R}} \min_j  e^{ -\frac{ \left\|\mathbf{y} - \mathbf{c}_j\right\|^2 }{ 2 \sigma^2 } } \mathrm{d}\mathbf{y}\nonumber\\
    &= \int_{\mathds{R}^N} \min_j e^{ -\frac{ \left\|\mathbf{y} - \mathbf{c}_j\right\|^2 }{ 2 \sigma^2 } } \mathrm{d}\mathbf{y}\nonumber\\
    &= \sum_{k = 1}^{M}\int_{\mathcal{R}^*_k} \min_{j} e^{ -\frac{ \left\|\mathbf{y} - \mathbf{c}_j\right\|^2 }{ 2 \sigma^2 } } \mathrm{d}\mathbf{y}\nonumber\\
    &\stackrel{(a)}{=} \sum_{k = 1}^{M} \int_{\mathcal{R}^*_k} e^{ -\frac{ \left\|\mathbf{y} - \mathbf{c}_k\right\|^2 }{ 2 \sigma^2 } } \mathrm{d}\mathbf{y}\nonumber\\
    &\stackrel{(b)}{=} \sum_{k = 1}^{M} \min_{j} \int_{\mathcal{R}^*_k} e^{ -\frac{ \left\|\mathbf{y} - \mathbf{c}_j\right\|^2 }{ 2 \sigma^2 } } \mathrm{d}\mathbf{y}\nonumber\\
    &= \eta \mu_{\sigma} \left( \mathscr{C} , \mathscr{R}^* \left( \mathscr{C} \right) \right).
    \end{align}
    Here, (a) and (b) follow from the definition of $\mathscr{R}^* \left( \mathscr{C} \right)$: for any $\mathbf{y} \in \mathcal{R}^*_k, \argmin_{j} e^{ -\frac{ \left\|\mathbf{y} - \mathbf{c}_j\right\|^2 }{ 2 \sigma^2 } } = k$.
\end{proof}
It is useful to note that the decision rule $\mathscr{R}^* \left( \mathscr{C} \right)$ given by the above lemma is the same as the $(M - 1)^\text{th}$-order Voronoi partitioning (called ``farthest-point Voronoi partitioning'', see Section 3.3 in~\cite{Voronoi-Book}) of $\mathds{R}^N$ w.r.t. $\mathscr{C}$.

\begin{lem}
    \label{lem:ball-to-sphere}
    Let $\mathscr{C}$ be a code satisfying $\| \mathbf{c} \|^2 \leq N P_0, \forall \mathbf{c} \in \mathscr{C}$. Then, there exists a code $\mathscr{C}'$ such that $\forall \mathbf{c'} \in \mathscr{C}', \| \mathbf{c'} \|^2  = NP_0$ and $ \mu_{\sigma}\left(\mathscr{C},\mathscr{R}^*\left(\mathscr{C}\right)\right) \geq \mu_{\sigma}\left(\mathscr{C}',\mathscr{R}^*\left(\mathscr{C}'\right)\right) $.
\end{lem}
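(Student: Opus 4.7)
The plan is to fix the decision rule to the original code's farthest-point Voronoi partition, thereby decoupling the construction of $\mathscr{C}'$ codeword by codeword, and then finish each per-codeword optimization by a log-concavity argument. By Lemma~\ref{lem:opt-R-given-C} we have $\mu_\sigma(\mathscr{C}',\mathscr{R}^*(\mathscr{C}'))\leq \mu_\sigma(\mathscr{C}',\mathscr{R}^*(\mathscr{C}))$, so it suffices to produce $\mathscr{C}' = (\mathbf{c}_1',\ldots,\mathbf{c}_M')$ with each $\|\mathbf{c}_j'\|^2 = NP_0$ and $\mu_\sigma(\mathscr{C}',\mathscr{R}^*(\mathscr{C}))\leq \mu_\sigma(\mathscr{C},\mathscr{R}^*(\mathscr{C}))$. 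Writing $F_j(\mathbf{c}) \triangleq \int_{\mathcal{R}^*_j} e^{-\|\mathbf{y}-\mathbf{c}\|^2/(2\sigma^2)}\,\mathrm{d}\mathbf{y}$, the chain of equalities inside the proof of Lemma~\ref{lem:opt-R-given-C} gives $(2\pi\sigma^2)^{N/2}\,\mu_\sigma(\mathscr{C},\mathscr{R}^*(\mathscr{C})) = \sum_j F_j(\mathbf{c}_j)$, while bounding the $\min_k$ in the definition of $\mu_\sigma(\mathscr{C}',\mathscr{R}^*(\mathscr{C}))$ by the single term $k=j$ yields $(2\pi\sigma^2)^{N/2}\,\mu_\sigma(\mathscr{C}',\mathscr{R}^*(\mathscr{C})) \leq \sum_j F_j(\mathbf{c}_j')$. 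It is therefore enough, per codeword, to pick $\mathbf{c}_j'$ on the sphere so that $F_j(\mathbf{c}_j') \leq F_j(\mathbf{c}_j)$.

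The natural choice is $\mathbf{c}_j' \in \argmin_{\|\mathbf{c}\|^2 = NP_0} F_j(\mathbf{c})$, with any sphere point used if $\mathcal{R}^*_j$ happens to be empty (in which case $F_j \equiv 0$). To verify $F_j(\mathbf{c}_j')\leq F_j(\mathbf{c}_j)$, my strategy is to prove the stronger assertion that the minimum of $F_j$ over the closed ball $\{\mathbf{c}: \|\mathbf{c}\|^2 \leq NP_0\}$ is already attained on the sphere; since the original $\mathbf{c}_j$ lies in this ball, this immediately yields $F_j(\mathbf{c}_j')\leq F_j(\mathbf{c}_j)$, and summing over $j$ closes the construction.

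The main obstacle, and in fact the only nontrivial ingredient, is establishing that $F_j$ is log-concave in $\mathbf{c}$. I plan to obtain this from Pr\'ekopa's theorem. The cell $\mathcal{R}^*_j$ is an intersection of halfspaces (from the definition of the farthest-point Voronoi partition) and hence convex, so its indicator is log-concave; the Gaussian kernel $\mathbf{z}\mapsto e^{-\|\mathbf{z}\|^2/(2\sigma^2)}$ is strictly log-concave; therefore the integrand $\mathbf{1}_{\mathcal{R}^*_j}(\mathbf{y})\,e^{-\|\mathbf{y}-\mathbf{c}\|^2/(2\sigma^2)}$ is jointly log-concave in $(\mathbf{y},\mathbf{c})$. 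Marginalizing out $\mathbf{y}$ preserves log-concavity by Pr\'ekopa's theorem, so $F_j$ is log-concave in $\mathbf{c}$. Bauer's minimum principle then forces the minimum of the concave function $\log F_j$ over the compact convex ball to be attained at an extreme point; since the extreme points of a closed Euclidean ball are precisely the points of the sphere, $F_j$ attains its minimum over the ball on the sphere, which delivers the required $F_j(\mathbf{c}_j')\leq F_j(\mathbf{c}_j)$ and completes the proof.
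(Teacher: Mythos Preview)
Your argument is correct and takes a genuinely different route from the paper's proof. The paper proceeds iteratively: it picks one interior codeword $\mathbf{c}_{k_0}$, uses convexity of the farthest-point Voronoi cell $\mathcal{R}^*_{k_0}$ to locate the Euclidean projection $\mathbf{z}_{k_0}$ of $\mathbf{c}_{k_0}$ onto that cell, and then slides $\mathbf{c}_{k_0}$ along the ray from $\mathbf{z}_{k_0}$ through $\mathbf{c}_{k_0}$ until it hits the sphere, arguing by elementary geometry that this increases the distance to every point of $\mathcal{R}^*_{k_0}$ and hence decreases $F_{k_0}$; it then re-applies Lemma~\ref{lem:opt-R-given-C} to switch to the new partition and repeats. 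You instead freeze the partition $\mathscr{R}^*(\mathscr{C})$ once, decouple into the per-codeword problems $\min_{\|\mathbf{c}\|^2=NP_0} F_j(\mathbf{c})$, and resolve each by the abstract combination Pr\'ekopa (log-concavity of $F_j$) plus Bauer (minimum of a concave function over a compact convex set lies at an extreme point). The paper's approach is fully elementary and self-contained, essentially hand-building the special case of Bauer's principle it needs via an explicit descent direction; your approach is shorter, avoids the iteration, and immediately generalizes to any log-concave noise density in place of the Gaussian.
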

\begin{proof}
    Let $\mathscr{C} = \left( \mathbf{c}_1, \ldots, \mathbf{c}_M \right)$ and let $\mathscr{R}^*\left( \mathscr{C} \right) = \left( \mathcal{R}^*_1,\ldots,\mathcal{R}^*_M\right)$. Consider a codeword that lies strictly inside the ball $\| \mathbf{x} \|^2 < NP_0$. If no such codeword exists, the statement of the lemma is trivially true with $\mathscr{C}' = \mathscr{C}$. For the non-trivial case, we can assume that such a codeword exists. Let $\mathbf{c}_{k_0}$ be that codeword. Consider the decision region $\mathcal{R}^*_{k_0} = \left\{ \mathbf{y} \in \mathds{R}^N \mid k_0 = \argmax_{i'} \left\|\mathbf{y} - \mathbf{c}_{i'}\right\|\right\} \in \mathscr{R}^*\left( \mathscr{C} \right)$. The region $\mathcal{R}^*_{k_0}$ (if non-empty) is convex since $\mathscr{R}^*\left( \mathscr{C} \right)$ is a Voronoi tesselation of order $M-1$ and since Voronoi cells of any order are convex regions (see Property OK.1 in Section 3.2 of~\cite{Voronoi-Book}). Hence, there exists a unique point $\mathbf{z}_{k_0}$ in $\mathcal{R}^*_{k_0}$ nearest to $\mathbf{c}_{k_0}$. By moving the codeword at $\mathbf{c}_{k_0}$ along the line joining $\mathbf{c}_{k_0}$ and $\mathbf{z}_{k_0}$ away from the latter, the distance from the codeword to every point in $\mathcal{R}^*_{k_0}$ is increased. We continue thus until the codeword is moved to the surface of the power-constraint sphere, at say $\mathbf{c'}_{k_0}$. Let us call the resulting code $\mathscr{C}_1$. Note that $\mathscr{C} \backslash \left\{\mathbf{c}_{k_0}\right\} = \mathscr{C} \backslash \left\{\mathbf{c'}_{k_0}\right\}$. Now consider
    \begin{align}
        \eta \mu_{\sigma}\left(\mathscr{C},\mathscr{R}^*\left(\mathscr{C}\right)\right) &= \sum_{k = 1}^{M} \min_{j} \int_{\mathcal{R}^*_k} {e^{ -\frac{ \left\|\mathbf{y} - \mathbf{c}_j\right\|^2 }{ 2 \sigma^2 } }}\mathrm{d}\mathbf{y}\nonumber\\
        &\stackrel{(c)}{=} \sum_{k = 1}^{M}\int_{\mathcal{R}^*_k} {e^{ -\frac{ \left\|\mathbf{y} - \mathbf{c}_k\right\|^2 }{ 2 \sigma^2 } } } \mathrm{d}\mathbf{y}\nonumber\\
        \begin{split}&= \sum_{\substack{k = 1 \\ k \neq k_0}}^{M}\int_{\mathcal{R}^*_k} {e^{ -\frac{ \left\|\mathbf{y} - \mathbf{c}_k\right\|^2 }{ 2 \sigma^2 } }} \mathrm{d}\mathbf{y} \\ &\quad \quad + \int_{\mathcal{R}^*_{k_0}} {e^{ -\frac{ \left\|\mathbf{y} - \mathbf{c}_{k_0}\right\|^2 }{ 2 \sigma^2 } }} \mathrm{d}\mathbf{y}\end{split}\nonumber\\
        \begin{split}& \stackrel{(d)}{\geq}  \sum_{\substack{k = 1 \\ k \neq k_0}}^{M}\int_{\mathcal{R}^*_k} {e^{ -\frac{ \left\|\mathbf{y} - \mathbf{c}_k\right\|^2 }{ 2 \sigma^2 } }} \mathrm{d}\mathbf{y}\\ &\quad \quad + \int_{\mathcal{R}^*_{k_0}} {e^{ -\frac{ \left\|\mathbf{y} - \mathbf{c'}_{k_0}\right\|^2 }{ 2 \sigma^2 } }} \mathrm{d}\mathbf{y}\end{split}\nonumber\\
        \begin{split}&\geq \sum_{\substack{k = 1 \\ k \neq k_0}}^{M} \min_{\mathbf{c} \in \mathscr{C}_1} \int_{\mathcal{R}^*_k} {e^{ -\frac{ \left\|\mathbf{y} - \mathbf{c}\right\|^2 }{ 2 \sigma^2 } }} \mathrm{d}\mathbf{y}\\ & \quad \quad + \min_{\mathbf{c} \in \mathscr{C}_1} \int_{\mathcal{R}^*_{k_0}} {e^{ -\frac{ \left\|\mathbf{y} - \mathbf{c}\right\|^2 }{ 2 \sigma^2 } }}\mathrm{d}\mathbf{y}\end{split}\nonumber\\
        &= \eta \mu_{\sigma}\left(\mathscr{C}_1,\mathscr{R}^*\left(\mathscr{C}\right)\right)\nonumber\\
        &\stackrel{(e)}{\geq} \eta \mu_{\sigma}\left(\mathscr{C}_1,\mathscr{R}^*\left(\mathscr{C}_1\right)\right).
    \end{align}
    Here, (c) follows from (b) in the proof of Lemma~\ref{lem:opt-R-given-C}, (d) follows from the construction of $\mathbf{c'}_{k_0}$ so that for every $\mathbf{y} \in \mathcal{R}^*_{k_0}, \| \mathbf{y} - \mathbf{c}_{k_0} \| \leq \| \mathbf{y} - \mathbf{c'}_{k_0} \|$, and (e) follows from Lemma~\ref{lem:opt-R-given-C}. Note also that we have only treated the case where $\mathcal{R}^*_{k_0}$ is non-empty. If on the other hand, that decision region was empty, we can move the codeword at $\mathbf{c}_{k_0}$ along any arbitrary direction. For such a case inequality (b) becomes an equality since the integrals over $\mathcal{R}^*_{k_0}$ would be zero.
    From a given code $\mathscr{C}$, we can thus obtain a code $\mathscr{C}_1$ having one more codeword on the surface of the power-constraint sphere, also satisfying $\mu_{\sigma}\left(\mathscr{C},\mathscr{R}^*\left(\mathscr{C}\right)\right) \geq \mu_{\sigma}\left(\mathscr{C}_1,\mathscr{R}^*\left(\mathscr{C}_1\right)\right)$. We can repeat this process several times to eventually obtain a code $\mathscr{C}'$ with all codewords on the power constraint sphere.
\end{proof}

\begin{lem}
    \label{lem:opt-code-config}
    For any rate $R$ length $N$ code $\mathscr{C}$ satisfying the power constraint $\forall \mathbf{c} \in \mathscr{C}, \| \mathbf{c} \|^2 = N P_0$,
    $$ \mu_{\sigma}\left(\mathscr{C},\mathscr{R}^*\left(\mathscr{C}\right)\right) \geq \mathcal{Q}\left( \frac{M-1}{M} \Omega_0, N, \frac{P_0}{\sigma^2} \right).$$
\end{lem}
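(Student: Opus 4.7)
The plan is to rewrite $\mu_\sigma(\mathscr{C},\mathscr{R}^*(\mathscr{C}))$ as a sum of per-codeword integrals, lower-bound each summand by a spherical rearrangement argument, and close via a pigeonhole step. First, repeating the reasoning behind step (b) in the proof of Lemma~\ref{lem:opt-R-given-C} (the inner $\min_j$ on $\mathcal{R}^*_k$ is attained at $j = k$ because $\mathbf{c}_k$ is the farthest codeword from every $\mathbf{y}\in\mathcal{R}^*_k$) gives
\[
    \mu_\sigma\big(\mathscr{C},\mathscr{R}^*(\mathscr{C})\big) \;=\; \sum_{k=1}^M \int_{\mathcal{R}^*_k} \frac{e^{-\|\mathbf{y}-\mathbf{c}_k\|^2/2\sigma^2}}{(2\pi\sigma^2)^{N/2}}\,\mathrm{d}\mathbf{y}.
\]
Because every codeword satisfies $\|\mathbf{c}_j\|^2 = NP_0$, the defining inequality of $\mathcal{R}^*_k$ is equivalent to $\langle\mathbf{y},\mathbf{c}_k\rangle \leq \langle\mathbf{y},\mathbf{c}_j\rangle$ for all $j$, which depends only on the direction $\mathbf{y}/\|\mathbf{y}\|$. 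Hence each $\mathcal{R}^*_k$ is a cone emanating from the origin; let $\omega_k$ denote its solid angle, so that $\sum_k \omega_k = \Omega_0$.

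The crux is the per-codeword bound
\[
    \int_{\mathcal{R}^*_k} \frac{e^{-\|\mathbf{y}-\mathbf{c}_k\|^2/2\sigma^2}}{(2\pi\sigma^2)^{N/2}}\,\mathrm{d}\mathbf{y} \;\geq\; \mathcal{Q}\big(\Omega_0 - \omega_k,\, N,\, P_0/\sigma^2\big).
\]
Writing $\mathbf{y} = r\hat{\mathbf{y}}$ and using $\|r\hat{\mathbf{y}} - \mathbf{c}_k\|^2 = r^2 + NP_0 - 2r\sqrt{NP_0}\cos\phi$, where $\phi\in[0,\pi]$ is the angle between $\hat{\mathbf{y}}$ and $\hat{\mathbf{c}}_k = \mathbf{c}_k/\|\mathbf{c}_k\|$, the integrand is, at every fixed radius $r$, a strictly decreasing function of $\phi$. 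By the spherical layer-cake / Hardy--Littlewood rearrangement inequality, amongst all measurable subsets $S$ of the unit sphere with solid angle $\omega_k$, the ``antipodal cap'' $S^* = \{\hat{\mathbf{y}}:\phi \geq g^{-1}(\Omega_0 - \omega_k)\}$ minimizes $\int_S e^{-\|r\hat{\mathbf{y}}-\mathbf{c}_k\|^2/2\sigma^2}\,\mathrm{d}\hat{\mathbf{y}}$. Integrating that inequality against $r^{N-1}\,\mathrm{d}r$ and invoking the geometric interpretation of $\Phi_\gamma$ given after Definition~\ref{def:functions} identifies the resulting minimum as $\Pr[\Phi_\gamma \geq g^{-1}(\Omega_0-\omega_k)] = \Pr[g(\Phi_\gamma) \geq \Omega_0-\omega_k] = \mathcal{Q}(\Omega_0 - \omega_k, N, P_0/\sigma^2)$, where the middle equality uses monotonicity of $g$ and the last is the definition of $\mathcal{Q}$.

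Summing the per-codeword bounds over $k$ and applying the averaging principle to $\sum_k \omega_k = \Omega_0$ produces at least one index $k^*$ with $\omega_{k^*} \geq \Omega_0/M$, whence $\Omega_0 - \omega_{k^*} \leq \tfrac{M-1}{M}\Omega_0$. Since $\mathcal{Q}(\cdot,N,\gamma)$ is one minus the CDF of the non-negative random variable $g(\Phi_\gamma)$ and is thus non-increasing in its first argument, $\mathcal{Q}(\Omega_0-\omega_{k^*},N,P_0/\sigma^2) \geq \mathcal{Q}(\tfrac{M-1}{M}\Omega_0,N,P_0/\sigma^2)$; discarding the remaining non-negative terms of the sum yields the stated bound. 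The main obstacle is the rearrangement step, i.e.\ justifying rigorously that the antipodal cap extremizes the surface integral of a function of $\phi$ alone over spherical regions of fixed solid angle. Slicing the sphere by parallels of constant $\phi$ reduces this to the one-dimensional bathtub inequality on $[0,\pi]$ with weight $(\sin\phi)^{N-2}$, and is conceptually the same sphere-packing computation Shannon uses for $Q^*$ in~\cite{Shannon-Papers}.
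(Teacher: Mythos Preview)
Your argument is correct for the lemma exactly as stated, and the first two phases---the observation that the $\mathcal{R}^*_k$ are cones through the origin (your inner-product rewriting is the paper's Lemma~\ref{lem:voronoi-sphere-pyramids}) and the rearrangement step replacing each $\mathcal{R}^*_k$ by the antipodal cap of the same solid angle---coincide with the paper's proof; your bathtub/layer-cake phrasing is just the abstract version of the paper's shell-by-shell comparison in Appendix~\ref{appdx:pyramid-to-cone}.

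The difference is in the final aggregation step. You use pigeonhole: pick one index $k^*$ with $\omega_{k^*}\ge\Omega_0/M$, invoke monotonicity of $\mathcal{Q}$ in its first argument, and drop the remaining nonnegative summands. The paper instead keeps the full sum $\sum_{k=1}^M \mathcal{Q}(\Omega_0-\Omega_k,N,P_0/\sigma^2)$ and applies Jensen's inequality, using the convexity of $\mathcal{Q}$ in its first argument (Shannon~\cite{Shannon-Papers}), to obtain
\[
\mu_\sigma\big(\mathscr{C},\mathscr{R}^*(\mathscr{C})\big)\;\ge\; M\,\mathcal{Q}\Big(\tfrac{M-1}{M}\Omega_0,\,N,\,P_0/\sigma^2\Big),
\]
i.e.\ a bound that is stronger than yours by a factor of $M$. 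This extra factor is not cosmetic: it is precisely what is carried through to Theorem~\ref{thm:unif-full} (note the $M\mathcal{Q}(\cdot)$ there). So while your route suffices for the displayed inequality, the Jensen route is what the paper actually needs downstream; if you want your proof to feed into Theorem~\ref{thm:unif-full}, replace the pigeonhole step by the convexity argument.
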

\begin{proof}
    For any code $\mathscr{C}$ that satisfies the requirements of the lemma, the decision regions in $\mathscr{R}^*\left( \mathscr{C} \right)$ consists of pyramids with their apex at the origin and extending out to infinity (see Appendix~\ref{appdx:voronoi-sphere-pyramids} for a proof). Assume that each of these regions $\left\{\mathcal{R}^*_k\right\}_{k=1}^M$ cut out a surface of area $\Omega_k$ on the unit $N$-sphere centered at the origin. Note that for each codeword $\mathbf{c}_k \in \mathscr{C}$ corresponding to message $k$, the decision region $\mathcal{R}^*_k$ contains the point $-\mathbf{c}_k$. Consider any term in the summation of the expression for $\mu_{\sigma} \left( \mathscr{C}, \mathscr{R}^*\left( \mathscr{C} \right) \right)$:
    \begin{eqnarray}
         \min_{j} \int_{\mathcal{R}^*_k} \frac{e^{ -{ \left\|\mathbf{y} - \mathbf{c}_j\right\|^2 }/{ 2 \sigma^2 } } }{ \left( 2 \pi \sigma^2 \right)^{\frac{N}{2}} }\mathrm{d}\mathbf{y} &=& \int_{\mathcal{R}^*_k} \frac{e^{ -{ \left\|\mathbf{y} - \mathbf{c}_k\right\|^2 }/{ 2 \sigma^2 } } }{ \left( 2 \pi \sigma^2 \right)^{\frac{N}{2}} }\mathrm{d}\mathbf{y}.\nonumber
    \end{eqnarray}
    The right hand side of the above equation is equal to the probability of the event $E_1$ that the transmitted codeword in $\mathds{R}^N$ located at $\mathbf{c}_k$ on the sphere $ \| \mathbf{x} \|^2 = N P_0$ is displaced by the noise vector into a specific region $\mathcal{R}^*_k$ that contains the point $-\mathbf{c}_k$. Now consider the probability of the event $E_2$ that the same transmitted codeword is displaced into the $N$-dimensional circular cone $\mathcal{C}^*_k$ that has its apex at the origin, axis running through $-\mathbf{c}_k$, and cutting out a surface of area $\Omega_k$ on the unit sphere centered at the origin (i.e., the solid angle of the $N$-dimensional circular cone is $\Omega_k$). We claim that the probability of $E_1$ cannot be smaller than the probability of $E_2$. A proof of this claim is provided in Appendix~\ref{appdx:pyramid-to-cone}. The probability of the event $E_2$ is equal to $\mathcal{Q} \left( \Omega_0-\Omega_k, N, P_0/\sigma^2 \right)$, from the definition of the $\mathcal{Q}$-function in Definition~\ref{def:functions}. Hence,
    \begin{eqnarray}
        \mu_{\sigma}\left(\mathscr{C},\mathscr{R}^*\left(\mathscr{C}\right)\right) &=& \sum_{k = 1}^M \min_{j} \int_{\mathcal{R}^*_k} \frac{e^{{ \left\|\mathbf{y} - \mathbf{c}_j\right\|^2 }/{ 2 \sigma^2 } } }{ \left( 2 \pi \sigma^2 \right)^{N/2} }\mathrm{d}\mathbf{y} \nonumber\\
        &\geq& \sum_{k = 1}^M \mathcal{Q} \left( \Omega_0-\Omega_k, N, \frac{P_0}{\sigma^2} \right).\label{eqn:q-before-jensen}
    \end{eqnarray}
    Noting that $\mathcal{Q}$ is a convex function of $\Omega_0 - \Omega_k$ (See Section III in~\cite{Shannon-Papers}), we apply Jensen's inequality to (\ref{eqn:q-before-jensen}):
    \begin{eqnarray}
        \mu_{\sigma}\left(\mathscr{C},\mathscr{R}^*\left(\mathscr{C}\right)\right) &\geq& M \frac{1}{M} \sum_{k = 1}^M \mathcal{Q} \left( \Omega_0-\Omega_k, N, \frac{P_0}{\sigma^2} \right)\nonumber\\
        &\geq& M \mathcal{Q} \left( \frac{ \sum_{k = 1}^M \left( \Omega_0-\Omega_k \right) }{M}, N, \frac{P_0}{\sigma^2} \right) \nonumber\\
        &=& M \mathcal{Q} \left( \frac{M-1}{M}\Omega_0, N, \frac{P_0}{\sigma^2}\right),
    \end{eqnarray}
    since $\sum_{k=1}^M \Omega_k = \Omega_0$.
\end{proof}

We are now ready to prove Theorem~\ref{thm:unif-full}.
\begin{proof}[Proof of Theorem~\ref{thm:unif-full}]
    Consider any link $i$. For the code $\mathscr{C}_{i-1}$ satisfying $\| \mathbf{c} \|^2 \leq N P_0, \forall \mathbf{c} \in \mathscr{C}_{i-1}$, we can apply Lemma~\ref{lem:ball-to-sphere} to construct another code $\mathscr{C}'_{i-1}$ such that $\forall \mathbf{c} \in \mathscr{C}'_{i-1}, \| \mathbf{c} \|^2 = N P_0$ and $ \mu_{\sigma_i}\left(\mathscr{C}_{i-1},\mathscr{R}^*\left(\mathscr{C}_{i-1}\right)\right) \geq \mu_{\sigma_i}\left(\mathscr{C}'_{i-1},\mathscr{R}^*\left(\mathscr{C}'_{i-1}\right)\right) $.  We then have:
    \begin{eqnarray}
        \mu_{\sigma_i}\left(\mathscr{C}_{i-1},\mathscr{R}_i\right) &\stackrel{(a)}{\geq}& \mu_{\sigma_i}\left(\mathscr{C}_{i-1},\mathscr{R}^*\left(\mathscr{C}_{i-1}\right)\right)\nonumber\\
        &\stackrel{(b)}{\geq}& \mu_{\sigma_i}\left(\mathscr{C}'_{i-1},\mathscr{R}^*\left(\mathscr{C}'_{i-1}\right)\right)\nonumber\\
        &\stackrel{(c)}{\geq}& M \mathcal{Q} \left( \frac{M-1}{M}\Omega_0, N, \frac{P_0}{\sigma_i^2}\right)\nonumber\\
        &\stackrel{(d)}{\geq}& M \mathcal{Q} \left( \frac{M-1}{M}\Omega_0, N, \frac{P_0}{\sigma_0^2}\right).\label{eqn:link-i:mu-bound-Q}
    \end{eqnarray}
    In the above chain of equations (a), (b), and (c) follow from Lemma~\ref{lem:opt-R-given-C}, Lemma~\ref{lem:ball-to-sphere} (as discussed above), and Lemma~\ref{lem:opt-code-config} respectively. Inequality (d) follows from Remark~\ref{rem:Q-is-monotonic}, since $\sigma_i^2 \geq \sigma_0^2$. Recalling that $\beta_i = 1 - \mu_{\sigma_i}\left(\mathscr{C}_{i-1},\mathscr{R}_i\right)$ and applying (\ref{eqn:link-i:mu-bound-Q}) to (\ref{eqn:mutual-info-beta-product}), we have the desired result:
    \[
        \mathcal{I}\left(W;\hat{W}_n\right) \leq NR \left[1 - M \mathcal{Q} \left( \frac{M-1}{M}\Omega_0, N, \frac{P_0}{\sigma_0^2} \right) \right]^n.\nonumber
    \]
\end{proof}
\section{Discussion}\label{sec:discussion}
We had mentioned in Section~\ref{sec:intro} that the bound presented in the current paper improves and complements the bound provided by~\cite{IZS12-Paper}. In this section, we demonstrate this fact with a comparison plot. The bound given by~\cite{IZS12-Paper} is:
\begin{eqnarray}
    \mathcal{I}\left( W; \hat{W}_n \right) &\leq& 2^{NR} \left( 1 - e^{-N E(P_0 / \sigma_0^2) } \right) ^{n \left(1-\epsilon\right)},\label{eqn:expo-old}
\end{eqnarray}
where for any $S \geq 0$,
\begin{eqnarray}
    E\left(S\right) &\triangleq& \frac{(S+2) + \sqrt{(S+2)^2 - 4}}{4}\nonumber\\
                    & & + \frac{1}{2}\log\left\{ (S+2)+\sqrt{(S+2)^2-4} \right\}.\nonumber
\end{eqnarray}
for asymptotically large $n$. The above bound decays with $n$ as $\left( 1 - e^{-N E(P_0 / \sigma_0^2) } \right)$ for any code rate $R$, while the bound given by Theorem~\ref{thm:unif-full} decays as $\left( 1 - M\mathcal{Q}\left(\frac{M-1}{M}\Omega_0, N, \frac{P_0}{\sigma_0^2} \right) \right)$. Now, let
\[
    E_{as}\left(R,S\right) \triangleq - \lim_{N \rightarrow \infty} \frac{\log\left(M\mathcal{Q}\left(\frac{M-1}{M}\Omega_0, N, S\right)\right)} {N}.
\]
We now investigate how these two bounds compare when $N$ is asymptotically large, by comparing the values of $E(S)$ and $E_{as}\left(R,S\right)$ where $S = P_0/\sigma_0^2$. To do so we obtained $E_{as}\left(R,S\right)$ as a function of $R$ and $S$ using the asymptotic analysis in~\cite{Shannon-Papers}. The expression for $E_{as}\left(R,S\right)$ is given below, with a justification in Appendix~\ref{appdx:asymptotics}:
$$E_{as}\left(R,S\right) = S - \frac{1}{2}\sqrt{S} G  \cos \theta - \log\left(G \sin \theta \right),$$ where $$G = \frac{1}{2} \left( \sqrt{S} \cos \theta + \sqrt{ 4 + S \cos^2 \theta }\right),$$ and $\theta = \pi - \sin^{-1} 2^{-R}$. Shown in Fig.~\ref{fig:plot-exponent-compare} is a plot of $E_{as}(R,S)/E(S)$ as a function of $S$, repeated for various $R$. As can be seen from the plot $E_{as}(R,S)$ is always smaller than $E(S)$ for any $R$ and $S$, thus showing that the bound obtained in Theorem~\ref{thm:unif-full} is tighter than the one given by (\ref{eqn:expo-old}). The current bound is also seen to be better when the SNR $S$ is not very high.
\begin{figure}[htbp]
    \centering
    \psfrag{x-axis}{\hspace{-0.375in}\scriptsize{Signal-to-Noise Ratio, $S$ in dB}}
    \psfrag{y-axis}{\hspace{-0.25in}\scriptsize{$E_{as}(R,S)/E(S)$}}
    \includegraphics[width = \columnwidth]{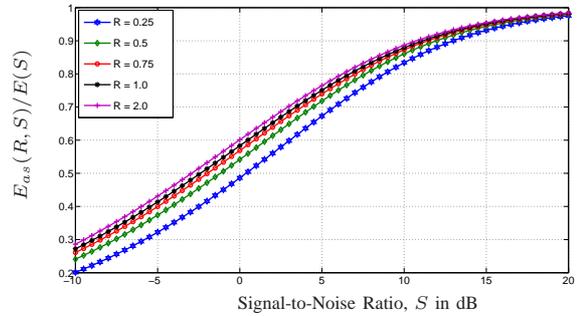}
    \caption{Comparison of exponents for large $N$.}
    \label{fig:plot-exponent-compare}
\end{figure}
\thispagestyle{empty}
\bibliographystyle{IEEEbib}
\bibliography{References}

\newpage
\appendices
\section{The Farthest-point Voronoi Tessellation for Points on a Sphere} \label{appdx:voronoi-sphere-pyramids}
\begin{lem}\label{lem:voronoi-sphere-pyramids}
    Given any $M, N \in \mathds{N}$, the non-empty cells in the farthest-point Voronoi tessellation of $\mathds{R}^N$ w.r.t. any set of $M$ points on an $N$-sphere of radius $A > 0$ are all semi-infinite pyramids.
\end{lem}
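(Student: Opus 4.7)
The proof plan rests on the observation that when all sites lie on a common sphere centered at the origin, the defining quadratic inequalities of a farthest-point Voronoi cell collapse to homogeneous linear inequalities, which immediately exposes each cell as a convex polyhedral cone with apex at the origin, i.e.\ a semi-infinite pyramid.

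Concretely, fix a set $\{\mathbf{c}_1,\dots,\mathbf{c}_M\}$ of $M$ points on the sphere of radius $A$ in $\mathds{R}^N$, fix an index $k$, and consider
\[
    \mathcal{R}^*_k = \{\mathbf{y}\in\mathds{R}^N : \|\mathbf{y}-\mathbf{c}_k\| \geq \|\mathbf{y}-\mathbf{c}_j\|,\ \forall j\}.
\]
The first step is to expand each norm, cancel the common $\|\mathbf{y}\|^2$, and use $\|\mathbf{c}_k\|^2=\|\mathbf{c}_j\|^2=A^2$ to reduce $\|\mathbf{y}-\mathbf{c}_k\| \geq \|\mathbf{y}-\mathbf{c}_j\|$ to the linear condition $\mathbf{y}\cdot(\mathbf{c}_j-\mathbf{c}_k)\geq 0$. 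This rewrites the cell as
\[
    \mathcal{R}^*_k = \bigcap_{j\neq k}\{\mathbf{y}\in\mathds{R}^N : \mathbf{y}\cdot(\mathbf{c}_j-\mathbf{c}_k)\geq 0\},
\]
a finite intersection of closed half-spaces whose bounding hyperplanes all pass through the origin.

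The second step is a short appeal to convex geometry: any finite intersection of closed linear half-spaces through the origin is a closed convex polyhedral cone with apex at the origin, and in particular is stable under the scaling $\mathbf{y}\mapsto t\mathbf{y}$ for every $t\geq 0$. Such a set is precisely what is meant by a semi-infinite pyramid with apex at the origin, proving the claim. Note that $\mathbf{y}=\mathbf{0}$ trivially satisfies every inequality, so the origin always lies in $\mathcal{R}^*_k$; if the cell is non-empty in the sense of containing a point other than the origin, say $\mathbf{y}_0\neq\mathbf{0}$, then the entire ray $\{t\mathbf{y}_0:t\geq 0\}$ sits inside it, confirming that the pyramid genuinely extends to infinity.

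I do not anticipate any real obstacle here: the only subtle point is to verify, when rewriting the inequality, that the cancellation that trivializes the quadratic dependence on $\mathbf{y}$ really requires the equal-norm hypothesis on the codewords, which is exactly what the on-sphere assumption provides. Once this observation is made, the remainder is a one-line invocation of the elementary fact that finite intersections of half-spaces through a common point are polyhedral cones with apex at that point, so the write-up should be a few lines at most.
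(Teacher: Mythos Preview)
Your proposal is correct and follows essentially the same approach as the paper: both arguments exploit the equal-norm hypothesis to cancel the quadratic terms $\|\mathbf{y}\|^2$ and $\|\mathbf{c}_j\|^2$, reducing the defining inequalities to homogeneous linear conditions $\langle \mathbf{y}, \mathbf{c}_j-\mathbf{c}_k\rangle \geq 0$, from which the cone property (closure under nonnegative scaling) is immediate. The paper phrases the conclusion by directly verifying that $\alpha\mathbf{x}$ lies in the cell whenever $\mathbf{x}$ does and $\alpha\geq 0$, whereas you identify the cell explicitly as an intersection of half-spaces through the origin; these are the same observation in slightly different packaging.
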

\begin{proof}
    Consider the farthest-point Voronoi tessellation of $\mathds{R}^N$ w.r.t. a code $\mathscr{C} = \left( \mathbf{c}_1,\ldots,\mathbf{c}_M \right) \text{s.t.} \|\mathbf{c}_i\|^2 = A^2, 1 \leq i \leq M$. Consider any $\mathbf{x} \in \mathds{R}^N$ and assume without loss of generality that it is contained in $\mathcal{R}^*_1$, the farthest-point Voronoi cell for $\mathbf{c}'_1$. In that case, we have for all $i$ s.t. $ 1 \leq i \leq M$,
    \begin{eqnarray}
        \|\mathbf{c}_1 - \mathbf{x}\|^2 &\geq& \|\mathbf{c}_i - \mathbf{x}\|^2 \nonumber\\
        \Rightarrow \| \mathbf{c}_1 \|^2 + \| \mathbf{x} \|^2 - 2 \langle \mathbf{c}_1,\mathbf{x} \rangle  &\geq& \| \mathbf{c}_i \|^2 + \| \mathbf{x} \|^2 - 2 \langle \mathbf{c}_i,\mathbf{x} \rangle \nonumber\\
        \Rightarrow - \langle \mathbf{c}_1,\mathbf{x} \rangle  &\geq& - \langle \mathbf{c}_i,\mathbf{x} \rangle.\label{eqn:pyramid-proof-inner-product}
    \end{eqnarray}
    Consider any $\alpha \geq 0$. We claim that $\alpha \mathbf{x}$ is also contained in $\mathcal{R}^*_1$. This can be shown to be true by applying (\ref{eqn:pyramid-proof-inner-product}) to the expansion of $\| \mathbf{c}_1 - \alpha \mathbf{x}\|^2$:
    \begin{eqnarray}
        \| \mathbf{c}_1 - \alpha \mathbf{x}\|^2 &=& \| \mathbf{c}_1 \|^2 + \alpha^2 \| \mathbf{x} \|^2 - 2 \alpha \langle \mathbf{c}_1,\mathbf{x} \rangle \nonumber\\
        &\geq& \| \mathbf{c}_i \|^2 + \alpha^2 \| \mathbf{x} \|^2 - 2 \alpha \langle \mathbf{c}_i,\mathbf{x} \rangle \nonumber\\
        &=& \| \mathbf{c}_i - \alpha \mathbf{x}\|^2,\nonumber
    \end{eqnarray}
    for all $i$ s.t. $1 \leq i \leq M$. Hence, $\alpha \mathbf{x}$ is also contained in $\mathcal{R}^*_1$. Generalizing this, we have shown that any non-empty Voronoi cell that contains a point $\mathbf{x}$ also contains the point $\alpha \mathbf{x}$ for any $\alpha \geq 0$. Such a region is a semi-infinite pyramid by definition.
\end{proof}

\section{Proof of the claim in Lemma~\ref{lem:opt-code-config}}\label{appdx:pyramid-to-cone}
Consider the $N-1$ dimensional cross-section of the pyramid $\mathcal{R}^*_k$ cut out by a sphere of radius $R$ centered at the origin. This will be an arbitrary spherical polygon. The cross-section of the cone $\mathcal{C}^*_k$ by the same sphere will be a spherical cap with its center at $-\mathbf{c}_k$. The axis of the cone cuts through the spherical cap at its center. The non-overlapping regions of such a spherical cap and a polygon are illustrated in Fig.~\ref{fig:pyramid-to-cone}. Since the both the cross sections have the same surface area $R^N \Omega_k$, the surface areas of the non-overlapping parts of both the cross-sections (indicated as $A_1$ and $A_2$ and by two different shadings in Fig.~\ref{fig:pyramid-to-cone}) are equal. Now, every point in the shaded region $A_1$ on the polygon is nearer to $\mathbf{c}_k$ than any point in $A_2$ is to $\mathbf{c}_k$. This is because the former lies outside the spherical cap centered at $-\mathbf{c}_k$ while the latter is inside the same. This in turn implies that the angle $\theta_2 \in [0, \pi]$ between the axis of the cone and the line joining any point $\mathbf{y}_2$ on $A_2$ and the origin is smaller than the angle $\theta_1 \in [0, \pi]$ between the axis and the line joining any point $\mathbf{y}_1$ on $A_1$ and the origin, as shown in the right hand side of Fig.~\ref{fig:pyramid-to-cone}. This in turn implies that $\mathbf{y}_1$ is closer to $\mathbf{c}_k$ than $\mathbf{y}_2$ is to $\mathbf{c}_k$, as shown below:
\begin{eqnarray}
    \| \mathbf{c}_k - \mathbf{y}_1 \|^2 &=& N P_0 + R^2 + 2 R \sqrt{N P_0} \cos{\theta_1}\nonumber\\
    & \leq & N P_0 + R^2 + 2 R \sqrt{N P_0} \cos{\theta_2} \nonumber\\
    &=& \| \mathbf{c}_k - \mathbf{y}_2 \|^2.\nonumber
\end{eqnarray}
This in turn means that the integral of the density function of the Gaussian noise vector with center at $\mathbf{c}_k$ over the volume $\mathcal{R}^*_k$ is greater than the integral over the volume $\mathcal{C}^*_k$. The former is the probability of the event $E_1$ and the latter is the probability of the event $E_2$.
\begin{figure*}[htbp]
    \centering
    \psfrag{A}{\scriptsize{Axis}}
    \psfrag{A1}{\scriptsize{$A_1$}}
    \psfrag{A2}{\scriptsize{$A_2$}}
    \psfrag{Y1}{\scriptsize{$\mathbf{y}_2$}}
    \psfrag{Y2}{\scriptsize{$\mathbf{y}_1$}}
    \psfrag{O}{\scriptsize{$O$}}
    \psfrag{t1}{\scriptsize{$\theta_2$}}
    \psfrag{t2}{\scriptsize{$\theta_1$}}
    \psfrag{r}{\hspace{-0.10in}\scriptsize{$\sqrt{N P_0}$}}
    \psfrag{R}{\scriptsize{$R$}}
    \psfrag{X}{\scriptsize{$\mathbf{c}_k$}}
    \includegraphics[width=0.66\textwidth]{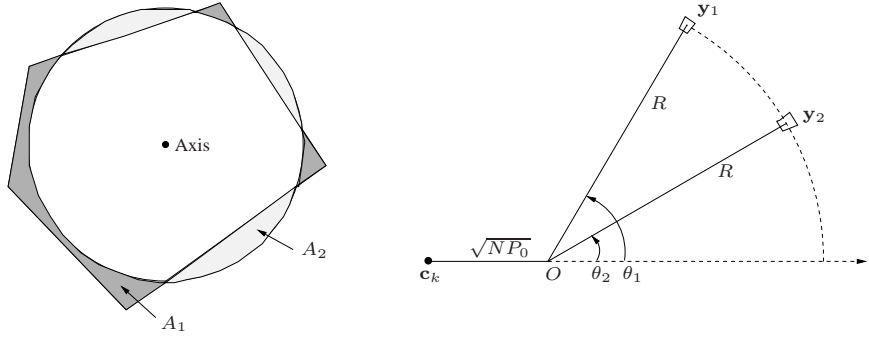}
    \caption{Illustration of the fact that the non-overlapping portion of the pyramidal cross-section is nearer to the original codeword than the non-overlapping portion of the conical cross section.}
    \label{fig:pyramid-to-cone}
\end{figure*}

\section{Asymptotic exponential decay of the $\mathcal{Q}$ function with $N$}\label{appdx:asymptotics}
 Though it is hard to express $\mathcal{Q}$ in terms of elementary functions, it is easy to obtain asymptotic approximations when the block length $N$ is very large. The idea is to use Shannon's computation of the sphere-packing exponent. Shannon derives a bound on $\mathcal{Q}\left(.\right)$ as a function of the cone angle $\theta \in [0,\pi]$ instead of the solid angle $\Omega$ since this makes asymptotic analysis easier. This results in a bound for $\mathcal{Q}\left(.\right)$ that decays exponentially in $N$, with the exponent being $$E_L\left(\theta\right) = \frac{P_0}{2\sigma_0^2} - \frac{1}{2}\sqrt{ \frac{P_0}{\sigma_0^2} } G  \cos \theta - \log\left(G \sin \theta \right),$$ where $$G = \frac{1}{2} \left( \sqrt{ \frac{P_0}{\sigma_0^2} } \cos \theta + \sqrt{ 4 + \frac{P_0}{\sigma_0^2} \cos^2 \theta }\right).$$ The bound on $\mathcal{Q}\left(.\right)$ for a given $\Omega$ can then be evaluated numerically or by any other means, since there is a one-to-one correspondence between the cone angle $\theta_0$ and the solid angle $\Omega$ (see Fig.~\ref{fig:cone-solid}): $$\Omega\left(\theta_0\right) = \frac{ (N-1) \pi^{ \frac{N-1}{2} } }{ \Gamma\left( \frac{N+1}{2} \right) } \int_0^{\theta_0} \left( \sin \theta_0 \right)^{N-2} \mathrm{d} \theta_0.$$ The particular case of interest in~\cite{Shannon-Papers} is $\mathcal{Q} \left( \frac{1}{M}\Omega_0, N, \frac{P_0}{\sigma_0^2} \right)$, which corresponds to the cone angle $\theta = \sin^{-1} 2^{-R}$  and the sphere-packing lower bound is obtained thus (see pages 620 and 625 in~\cite{Shannon-Papers}).  Our bound involves $\mathcal{Q} \left( \frac{M-1}{M}\Omega_0, N, \frac{P_0}{\sigma_0^2} \right)$ instead, and hence we will have to evaluate the exponent $E_L\left(\theta\right)$ with $\theta = \pi - \sin^{-1} 2^{-R}$ instead, giving us the following result:
\begin{figure}[htbp]
    \centering
    \psfrag{Omega}{\scriptsize{$\Omega \left(\theta_0\right)$}}
    \psfrag{t}{\scriptsize{$\theta_0$}}
    \includegraphics[width=0.40\columnwidth]{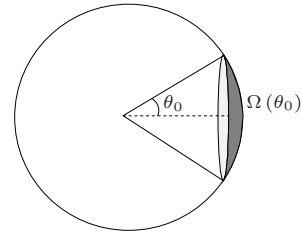}
    \caption{Relation between solid angle and cone angle.}
    \label{fig:cone-solid}
\end{figure}
\begin{eqnarray}
    \mathcal{I}\left(W;\hat{W}_n\right) \lessapprox \left( 1 - e^{- N E_{as}\left(R,P_0/\sigma_0^2\right)} \right)^n,
\end{eqnarray}
where $E_{as}\left(R,S\right)$ is as shown in (\ref{eqn:E_as-appdx}).
\begin{figure*}[htb]
\centering
\begin{eqnarray}
    E_{as}\left(R,S\right) &=& \frac{S}{2^{2R + 2}} \left( (2^{2R} + 1) + (2^{2R} - 1) \sqrt{ 1 + \frac{2^{2R+2}}{\left(2^{2R} - 1\right) S} }\right)\nonumber\\
    & & \ \ \ \ \ + \frac{1}{2} \log \left[ 2^{2R} + \frac{S}{2} (2^{2R} - 1) \left( \sqrt{ 1 + \frac{2^{2R+2}}{\left(2^{2R} - 1\right) S} } + 1\right)  \right] - R \log 2.\label{eqn:E_as-appdx}
\end{eqnarray}
\end{figure*}
\end{document}